\newtheorem{thm}{Theorem}[section]
\newtheorem{lem}[thm]{Lemma}
\theoremstyle{definition}
\newtheorem{defn}[thm]{Definition}
\theoremstyle{remark}
\newtheorem{rem}[thm]{Remark}
\numberwithin{equation}{section}
\begin{document}
\title[Stationary solutions of the Schr\"{o}dinger-Newton model ...]{Bounds
states of the Schr\"{o}dinger-Newton model in low dimensions}
\author{Joachim Stubbe* and Marc Vuffray*}
\address{*EPFL, IMB-FSB, Station 8, CH-1015 Lausanne, Switzerland}
\email{Joachim.Stubbe@epfl.ch,Marc.Vuffray@epfl.ch}
\thanks{}
\subjclass{35Q55, 35Q40, 47J10}
\keywords{Schr\"{o}dinger-Newton equations, nonlinear Schr\"{o}dinger
equation}
\date{11th December 2008}

\begin{abstract}
We prove the existence of quasi-stationary symmetric solutions with exactly $%
n\geq 0$ zeros and uniqueness for $n=0$ for the Schr\"{o}dinger-Newton model
in one dimension and in two dimensions along with an angular momentum $m\geq
0$. Our result is based on an analysis of the corresponding system of
second-order differential equations.
\end{abstract}

\maketitle

%


\section{Introduction}

We consider the Schr\"{o}dinger-Newton equations 
\begin{equation}
i\psi _{t}+\Delta \;\psi -\gamma \Phi \psi =0,\quad \Delta \;\Phi =|\psi
|^{2}  \label{SN}
\end{equation}%
on $\mathbb{R}^{d}$, $d\in \left\{ 1,2\right\} $, which is equivalent to the
nonlinear Schr\"{o}dinger equation 
\begin{equation}
i\psi _{t}+\Delta \psi +\gamma \left( G_{d}\left( \left\vert x\right\vert
\right) \ast |\psi |^{2}\right) \psi =0  \label{NLS}
\end{equation}%
where $G_{d}\left( \left\vert x\right\vert \right) $ denotes the Green's
function of the Laplacian on $\mathbb{R}^{d}$. Of physical interest are
solutions having finite energy $E$ and particle number (or charge) $N$ given
by 
\begin{equation}
E(\psi )=\frac{1}{2}\int_{\mathbb{R}^{d}}|\nabla {\psi }(x,t)|^{2}\;dx-\frac{%
\gamma }{4}\iint_{\mathbb{R}^{d}}G_{d}\left( \left\vert x-y\right\vert
\right) |\psi (x,t)|^{2}|{\psi }(y,t)|^{2}\;dxdy  \label{energy}
\end{equation}%
and 
\begin{equation}
N(\psi )=\int_{\mathbb{R}^{d}}|\psi (x,t)|^{2}\;dx,  \label{charge}
\end{equation}%
respectively.

In the physical and mathematical literature the Schr\"{o}dinger-Newton
system in three space dimensions has a long standing history. With $\gamma $
designating appropriate positive coupling constants it appeared first in
1954, then in 1976 and lastly in 1996 for describing the quantum mechanics
of a Polaron at rest by S. J. Pekar ~\cite{P54}, of an electron trapped in
its own hole by Ph. Choquard ~\cite{L77} and of self-gravitating matter by
R. Penrose ~\cite{P96}. In 1977, E.Lieb ~\cite{L77} showed the existence of
a unique spherically symmetric ground state in three space dimensions by
solving an appropriate minimization problem. This ground state solution $%
u_{\omega }(x)$, $\omega >0$ is a positive spherically symmetric strictly
decreasing function. In ~\cite{L80}, P.L. Lions proved the existence of
infinitely many distinct spherically symmetric solutions and claimed a proof
for the existence of anisotropic bound states in ~\cite{L86}.

While Lieb's existence proof can be easily extended to dimensions $d=4$ and $%
d=5$, the situation has been unclear for lower dimensions due to the lack of
positivity of the Coulomb interaction energy term. For the one-dimensional
and two-dimensional problem this difficulty has been overcome recently in ~%
\cite{CS2007} and \cite{S2008} and the existence of a unique spherically
symmetric ground state has been shown by solving a minimization problem.
Finally, existence and uniqueness of spherically symmetric ground states has
been shown in any dimension $d\leq 6$ in \cite{CSV2008} by employing a
shooting method for the associated system of ordinary differential
equations. In addition, for $d=1$, the existence of a unique antisymmetric
ground states (which is an eigenfunction of the parity-operator) has been
shown in \cite{CS2007} and, for $d=2$, the existence of unique angular
excitations (eigenfunction of the angular momentum operator) has been shown
in \cite{S2008}. However, in one and two dimensions, existence of higher
bound states remained open along with an angular momentum in the last case
and so far only numerical studies are available indicating the existence of
excited states, see e.g. ~\cite{HMT2003}. Our main result proves the
existence of such solutions in the attractive case $\gamma >0$.

\subsection{The one-dimensional case}

We study the existence and uniqueness of quasi stationary solutions of the
form%
\begin{equation}
\psi (t,x)=\left( \frac{x}{\left\vert x\right\vert }\right) ^{p}\varphi
(\left\vert x\right\vert )e^{-i\omega t},\quad \underset{|x|\rightarrow
\infty }{\lim }\varphi (\left\vert x\right\vert )=0,  \label{eqref1d}
\end{equation}%
where $p=\left\{ 0,1\right\} $ for an odd or even function. For solutions of
the form \eqref{eqref1d} we have $\Phi (t,x)=v(|x|)$ and $\varphi (r),v(r)$
satisfy the following system of ordinary differential equations: 
\begin{equation}
\begin{split}
& \varphi ^{\prime \prime }=(\gamma v-\omega )\varphi \\
& v^{\prime \prime }=\varphi ^{2},\quad r\geq 0. \\
&
\end{split}
\label{ODE-1d}
\end{equation}%
We suppose that $v\left( 0\right) $ are finite and $v^{\prime }\left(
0\right) =0.$ In addition $\phi $ is subject to the initial condition 
\begin{align*}
\phi \left( 0\right) & \in 
\mathbb{R}
_{+}\text{, }\phi ^{\prime }\left( 0\right) =0\text{ \ \ \ \ if~}p=0 \\
\phi \left( 0\right) & =0\text{, \ }~\phi ^{\prime }\left( 0\right) \in 
\mathbb{R}
_{+}\text{ \ \ ~if~}p=1
\end{align*}%
The latter equation of (\ref{ODE-1d}) implies that $v^{\prime }\geq 0$, and
therefore for solutions $\phi $ vanishing at infinity we have $\omega
-\gamma v\left( 0\right) >0.$ Changing the variables and rescaling $u\left(
r\right) =Ar^{-p}~\phi \left( r/\sigma \right) $, $V\left( r\right) =B\left(
v\left( r/\sigma \right) -\omega /\gamma \right) +1$ with%
\begin{equation*}
\sigma =\sqrt{\omega -\gamma v\left( 0\right) },~~A=\frac{\sqrt{\gamma }}{%
\sigma ^{2}},~~B=\frac{\gamma }{\sigma ^{2}},
\end{equation*}%
we obtain the following system of equations:

\begin{equation}
\begin{split}
& u^{\prime \prime }+\frac{2p}{r}\;u^{\prime }=(V-1)u \\
& V^{\prime \prime }=u^{2}r^{2p} \\
&
\end{split}
\label{ODE-1d-re}
\end{equation}%
subject to the initial conditions 
\begin{equation}
u(0)=u_{0}\in \mathbb{R}^{+},\;u^{\prime }(0)=0,\;V(0)=0,V^{\prime }(0)=0.
\label{ini 1D}
\end{equation}

\subsection{The two-dimensional case}

We search solutions of (\ref{SN}) uniformly rotating with an angular
velocity $\Omega $. In the rotating frame the equation (\ref{SN}) becomes%
\begin{equation*}
i\psi _{t}+\Delta \;\psi -\gamma \Phi \psi -\Omega ~L\psi =0,\quad \Delta
\;\Phi =|\psi |^{2}
\end{equation*}%
where $L=-i\left( x\partial _{y}-y\partial _{x}\right) $, the orbital
angular momentum operator.

We study the existence and uniqueness of quasi stationary solutions in the
rotating frame of the form 
\begin{equation}
\psi (t,x)=\varphi (|x|)e^{-i\omega t+im\theta },\quad \underset{%
|x|\rightarrow \infty }{\lim }\varphi (|x|)=0.  \label{eqref2d}
\end{equation}%
If, in addition, 
\begin{equation*}
\varphi (|x|)\geq 0,
\end{equation*}%
we call this solution the ground states for a given orbital angular momentum 
$m$, the other one are called excited states. For solutions of the form %
\eqref{eqref2d} we have $\Phi (t,x)=v(|x|)$ and $\varphi (r),v(r)$ satisfy
the following system of ordinary differential equations: 
\begin{equation}
\begin{split}
& \varphi ^{\prime \prime }+\frac{1}{r}\;\varphi ^{\prime }-\frac{m^{2}}{%
r^{2}}\varphi =(\gamma v-(\omega -\Omega m))\varphi \\
& v^{\prime \prime }+\frac{1}{r}\;v^{\prime }=\varphi ^{2},\quad r\geq 0. \\
&
\end{split}
\label{ODE-2d}
\end{equation}%
We suppose that $v\left( 0\right) $, $\phi \left( 0\right) $ are finite and $%
v^{\prime }\left( 0\right) =\phi ^{\prime }\left( 0\right) =0.$ The latter
equation implies that $v^{\prime }\geq 0$, and therefore for solutions $\phi 
$ vanishing at infinity we have $\omega -\Omega m-\gamma v\left( 0\right)
>0. $ Changing the variables and rescaling $u\left( r\right) =r^{-m}A~\phi
\left( r/\sigma \right) $, $V\left( r\right) =B\left( v\left( r/\sigma
\right) -\left( \omega -\Omega m\right) /\gamma \right) +1$ with%
\begin{equation*}
\sigma =\sqrt{\omega -\Omega m-\gamma v\left( 0\right) },~~A=\frac{\sqrt{%
\gamma }}{\sigma ^{2}},~~B=\frac{\gamma }{\sigma ^{2}},
\end{equation*}%
we obtain the following system of equations:

\begin{equation}
\begin{split}
& u^{\prime \prime }+\frac{2\left( m+1\right) }{r}\;u^{\prime }=(V-1)u \\
& V^{\prime \prime }+\frac{2}{r}\;V^{\prime }=u^{2}r^{2m} \\
&
\end{split}
\label{ODE-2d-re}
\end{equation}%
subject to the initial conditions (\ref{ini 1D}).

\subsection{\protect\bigskip The general case}

Equations (\ref{ODE-1d-re}), (\ref{ODE-2d-re}) are particular cases of the
following initial value problem.%
\begin{equation}
\begin{split}
& u^{\prime \prime }+\frac{2m+d-1}{r}\;u^{\prime }=(V-1)u \\
& V^{\prime \prime }+\frac{d-1}{r}\;V^{\prime }=u^{2}r^{2m} \\
&
\end{split}
\label{ODE-d-re}
\end{equation}%
subject to the initial conditions 
\begin{equation}
u(0)=u_{0}\in \mathbb{R}^{+},\;u^{\prime }(0)=0,\;V(0)=0,V^{\prime }(0)=0.
\label{ini}
\end{equation}%
Here, $m\geq 0$ may be regarded as a continuous parameter. By analyzing the
solutions of the above initial value problem we shall prove the following
result about the existence and uniqueness of ground states and the existence
of the excited states:

\begin{thm}
\label{main-theorem} For $d=1,2$, any $m\geq 0$ and $n\geq 0$ the system %
\eqref{ODE-d-re} subject to the initial conditions \eqref{ini} admits a
solution $(u_{m,n},V_{m,n})$ such that $u_{m,n}$ has exactly $n$ zeros and 
\begin{equation}
\underset{r\rightarrow \infty }{\lim }u_{m,n}(r)=0.  \label{decay}
\end{equation}%
Moreover, the ground states $u_{m,0}$ are unique and satisfies $u_{m,0}(r)>0$%
, $u_{m,0}^{\prime }(r)<0$ on $]0,\infty \lbrack $.
\end{thm}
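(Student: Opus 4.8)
The plan is to treat \eqref{ODE-d-re}--\eqref{ini} as a one-parameter shooting problem in the initial height $u_0=u(0)$ and to classify solutions by their number of zeros together with their behaviour as $r\to\infty$. First I would lay the analytic groundwork. Writing $k=2m+d-1$, the two equations become $(r^{k}u')'=r^{k}(V-1)u$ and $(r^{d-1}V')'=r^{2m+d-1}u^{2}$, so that
\begin{equation*}
r^{d-1}V'(r)=\int_{0}^{r}s^{2m+d-1}u(s)^{2}\,ds\ge 0 .
\end{equation*}
Hence $V'\ge 0$ and $V$ is strictly increasing wherever $u\not\equiv 0$, with $V(0)=0$. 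Near $r=0$ the singular initial value problem is handled by recasting it as a fixed-point integral equation, giving a unique solution with $u(r)=u_{0}+O(r^{2})$ and $V(r)=O(r^{2m+2})$, together with continuous (indeed smooth) dependence of $(u,V)$ on $u_{0}$ on every compact $r$-interval; global existence in $r$ then follows from the a priori bounds below.

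The next step is a qualitative trichotomy. While $V<1$ the $u$-equation is oscillatory, whereas on any set where $V>1$ and $u>0$ one has $(r^{k}u')'>0$; a short convexity argument then shows that $u$ can change sign at most once more before escaping monotonically to $+\infty$ or $-\infty$. I would show that $V$ attains the level $1$ at a finite radius unless $u$ itself tends to $0$: otherwise $u$ would oscillate with non-vanishing amplitude, forcing the weighted integral above, and hence $V$, to diverge, a contradiction. Consequently every solution has finitely many zeros and satisfies exactly one of $u\to+\infty$, $u\to-\infty$, or $u\to0$; the last alternative is the bound state we seek.

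For existence I would run the standard continuity/separatrix argument. Let $A^{+}_{n}$ (resp.\ $A^{-}_{n}$) be the set of $u_{0}>0$ for which $u(\cdot;u_{0})$ has exactly $n$ zeros and escapes to $+\infty$ (resp.\ $-\infty$). Continuous dependence and the robustness of escape, namely that once $r^{k}u'$ has a definite sign in the region $V>1$ the escape persists under small perturbations of $u_{0}$, make these sets open, and a decaying solution with $n$ zeros must occur at any boundary point separating an $A^{\pm}_{n}$-region from the neighbouring region with one more excursion. To realise every $n$ I would control the two extremes: for large $u_{0}$ the potential crosses $1$ at a small radius before $u$ can oscillate, so $u$ escapes to $+\infty$ without changing sign; for small $u_{0}$ the crossing radius $r_{1}\sim u_{0}^{-1/(m+1)}$ is large while $u$ oscillates on an $O(1)$ scale, so the number of zeros tends to $\infty$. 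A sweep in $u_{0}$ combined with the intermediate-value principle then yields, for each $n\ge0$, a value producing a solution with exactly $n$ zeros and $u\to0$.

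The main obstacle is uniqueness and strict monotonicity of the ground state $u_{m,0}$. Strict monotonicity $u_{m,0}'<0$ on $(0,\infty)$ follows once positivity $u_{m,0}>0$ is known: the identity $(r^{k}u')'=r^{k}(V-1)u$ makes $r^{k}u'$ monotone in each of the regions $V<1$ and $V>1$, and decay at infinity excludes an interior minimum, so $u'$ remains negative. For uniqueness I would compare two hypothetical positive decaying solutions $u_{1},u_{2}$ with $u_{1}(0)<u_{2}(0)$ through the variational equation for $w=\partial u/\partial u_{0}$, equivalently through a Wronskian of $u_{1},u_{2}$ coupled to the corresponding potentials $V_{i}$: the coupling through the monotone $V_{i}$ should pin $w$ to a fixed sign, so that a positive solution cannot decay for two distinct values of $u_{0}$. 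Making the nonlinear, nonlocal coupling between $u$ and $V$ cooperate in a single-signed Wronskian identity is the delicate part and the crux of the whole argument.
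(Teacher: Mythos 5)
Your existence argument is essentially the paper's: shooting in $u_{0}$, classifying solutions by number of zeros together with escape versus decay, openness of the escape sets, many zeros for small $u_{0}$, no zeros for large $u_{0}$, and decaying solutions at the boundary values. To make your "sweep plus intermediate value" step rigorous you need exactly the lemma you gesture at with the convexity remark: since $u$ can change sign at most once in the region $V>1$, every $u_{0}$ whose solution has $n$ zeros possesses a neighbourhood contained in $\mathcal{N}_{m,n}\cup \mathcal{N}_{m,n+1}$; the paper then runs an induction on $n$, showing that $\alpha_{n}=\inf \mathcal{N}_{m,n}$ lies in $\mathcal{N}_{m,n}^{0}$ and that points just below $\alpha_{n}$ have exactly $n+1$ zeros, which is what makes $\mathcal{N}_{m,n+1}$ nonempty at the next stage. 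Two small corrections: for $d\leq 2$ the potential $V$ always crosses the level $1$, even along decaying solutions, since $V^{\prime}(r)\geq c\,r^{1-d}$ once $r$ is bounded away from $0$ (this is precisely where the low-dimension restriction enters, and it gives finitely many zeros for \emph{every} solution, not only the non-decaying ones); and global existence in $r$ is false for escaping solutions, which may blow up at finite radius, so openness must be argued via continuous dependence on compact intervals, as the paper does.

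The genuine gap is in the uniqueness step, and you flag it yourself as unresolved. A single-signed Wronskian cannot, by itself, exclude two ground states: if $w=u_{2}^{\prime}u_{1}-u_{1}^{\prime}u_{2}>0$ on $(0,\infty)$, this only says that $u_{2}/u_{1}$ is increasing, which is perfectly compatible with both solutions tending to zero (compare $e^{-2r}$ and $e^{-r}$). What closes the argument in the paper is a sharp decay estimate, proved by a Riccati/direction-field analysis: for a ground state the quantity $z=-\frac{u^{\prime}}{u}V^{-1/2}$ stays bounded and in fact tends to $1$, hence $u$ decays like $e^{-\kappa \int_{0}^{r}V^{1/2}(s)\,ds}$ for any $\kappa \in (0,1)$, i.e.\ faster than any exponential since $V\to \infty$. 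Combined with the trivial polynomial bound $V_{2}(r)\leq C r^{2(m+1)}$, this makes $u_{2}r^{2m+d-1}$ and $u_{2}^{\prime}(r)r^{2m+d-1}=\int_{0}^{r}(V_{2}-1)u_{2}s^{2m+d-1}\,ds$ uniformly bounded, whence
\begin{equation*}
|w(r)r^{2m+d-1}|\leq c_{1}|u_{1}(r)|+c_{2}|u_{1}^{\prime}(r)|\longrightarrow 0 .
\end{equation*}
The contradiction is then that $w(r)r^{2m+d-1}$ is strictly increasing from $w(0)=0$ (this is where the no-crossing property $u_{2}>u_{1}$, hence $V_{2}>V_{1}$, enters) and yet vanishes at infinity. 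Without this decay lemma, which is the real technical core of the uniqueness proof, your Wronskian strategy cannot be completed.
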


To prove the main result we use an extension of the shooting method we have
used in \cite{CSV2008}. Shooting methods have been successfully applied to
existence and uniqueness of solutions in boundary value problems for second
order nonlinear differential equations ~\cite{BLP81},~\cite{PS83},~\cite%
{AP86},~\cite{K89} . Our paper is organized as follows: In Section 2 we
employ a shooting method to prove the existence of ground states and by
induction the existence of all excited states (theorem \ref{existence-gs}).
In Section 3 study their decay properties to prove uniqueness by analyzing
the Wronskian of solutions (theorem \ref{uniqueness-in-G}).

\section{Existence of bound states}

We begin our study with the discussion of some general properties of
solutions of \eqref{ODE-d-re} with initial values \eqref{ini}. Standard
results will guarantee local existence and uniqueness of solutions, their
continuous dependence on the initial values as well as on the parameter $m$
and their regularity. As a consequence of local existence and uniqueness
solutions can only have simple zeros. We shall frequently apply these
properties in the sequel as well as the following integral equations for $%
u^{\prime }$ and $V^{\prime }$:

\begin{equation}
\begin{split}
& u^{\prime }(r)=\frac{1}{r^{2m+d-1}}\int_{0}^{r}(V(s)-1)u(s)s^{2m+d-1}\;ds
\\
& V^{\prime }(r)=\frac{1}{r^{d-1}}\int_{0}^{r}u^{2}(s)s^{2m+d-1}\;ds. \\
&
\end{split}
\label{int-eq}
\end{equation}%
Since $u_{0}>0$, we see from integration of the equation \ref{int-eq} that $%
V\left( r\right) $ is increasing, since $d\leq 2$, and goes to infinity.
Thus for any $u_{0}$ there exist a unique $a>0$, such that $V\left( a\right)
=1$. This is indeed a crucial point to restrict ourselves to dimensions $%
\leq 2$.

For the initial condition $u_{0}>0$ of the solution $(u,V)$ for an $m\geq 0$%
, we consider the following sets:

\begin{defn}
\begin{equation}
\mathcal{N}_{m,n}:=\left\{ u_{0}\in 
\mathbb{R}
_{+}:u\text{ has exactly }n\text{ zeros}\right\}  \label{Ni set}
\end{equation}%
\begin{equation}
\mathcal{N}_{m,n}^{0}:=\left\{ u_{0}\in \mathcal{N}_{m,n}:\lim_{r\rightarrow
\infty }u\left( r\right) =0\right\}  \label{Ni-zero set}
\end{equation}%
\begin{equation}
\mathcal{N}_{m,n}^{\infty }:=\mathcal{N}_{m,n}\setminus \mathcal{N}_{m,n}^{0}
\label{Ni-inf set}
\end{equation}%
\begin{equation}
\mathcal{N}_{m}:=\bigcup\limits_{n=1}^{\infty }\mathcal{N}_{m,n}
\label{N set}
\end{equation}
\end{defn}

Note that since $u^{\prime \prime }(0)=-u_{0}/\left( 2m+d\right) $ all
solutions start strictly decreasing.\ Therefore if $u$ has a first critical
point $r_{1}>0$ where $u>0$, then $V(r_{1})\geq 1$ and since $V$ is strictly
increasing (see \eqref{int-eq}) it follows again from \eqref{int-eq} that $%
u^{\prime }(r)>0$ for all $r>r_{1}$. Hence every solution $u_{0}\notin 
\mathcal{N}_{m,0}^{\infty }$ is strictly decreasing in the maximal interval $%
\left( 0,R\right) $ where $u>0$. Obviously the sets $\mathcal{N}%
_{m,n}^{\infty }$, $\mathcal{N}_{m,n}^{0}$ are mutually disjoint and we will
show that these sets forms a partition of $%
\mathbb{R}
_{+}$. First we prove that $u$ can oscillate only in a finite range and then
we show that $u$ has only a finite number of zero in this range.

\begin{lem}
\label{number of zeros in V>1}Let $a$ be such that $V\left( a\right) =1,$
then $u$ has at most one zero in the range $r>a$.

\begin{proof}
If $u$ has two zeros for $r>a,$ there exist a critical point $\overline{r}$
such that $u\left( \overline{r}\right) \neq 0$, $u\left( \overline{r}\right)
u^{\prime \prime }\left( \overline{r}\right) \leq 0$. Moreover from equation
(\ref{int-eq}) we have%
\begin{equation*}
u\left( \overline{r}\right) u^{\prime \prime }\left( \overline{r}\right)
=(V\left( \overline{r}\right) -1)u\left( \overline{r}\right) ^{2}>0
\end{equation*}%
which is the desired contradiction.
\end{proof}
\end{lem}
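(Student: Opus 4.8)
The plan is to argue by contradiction, exploiting the sign of the forcing term $V-1$ together with the elementary fact that at a genuine local extremum of $u$ one must have $u\,u''\le 0$. Since $V$ is strictly increasing with $V(a)=1$, we have $V(r)-1>0$ for every $r>a$, so on the region $r>a$ the equation for $u$ is $u''+\frac{2m+d-1}{r}u'=(V-1)u$ with a strictly positive coefficient $V-1$ multiplying $u$.

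First I would suppose, for contradiction, that $u$ has at least two zeros in $(a,\infty)$, and then pass to two \emph{consecutive} such zeros $z_1<z_2$. On the open interval $(z_1,z_2)$ the solution $u$ keeps a constant nonzero sign, because local existence and uniqueness forces all zeros to be simple and isolated. By Rolle's theorem there is an interior critical point $\bar r\in(z_1,z_2)$ with $u'(\bar r)=0$; moreover $\bar r>z_1>a$ and $u(\bar r)\neq 0$. Being a local extremum of $u$, the point $\bar r$ satisfies $u(\bar r)\,u''(\bar r)\le 0$ (a local maximum where $u>0$, or a local minimum where $u<0$).

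Next I would evaluate the differential equation at $\bar r$. Since $u'(\bar r)=0$ the damping term drops out and the equation collapses to $u''(\bar r)=(V(\bar r)-1)\,u(\bar r)$. Multiplying by $u(\bar r)$ gives
\[
u(\bar r)\,u''(\bar r)=(V(\bar r)-1)\,u(\bar r)^{2}>0,
\]
because $V(\bar r)-1>0$ (as $\bar r>a$) and $u(\bar r)\neq 0$. This contradicts $u(\bar r)\,u''(\bar r)\le 0$, and the contradiction shows that no two zeros can occur beyond $a$, i.e. $u$ has at most one zero in $r>a$.

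The step I expect to require the most care is the bookkeeping that places the critical point strictly to the right of $a$ with $u$ nonvanishing there: this is precisely why I pass to consecutive zeros and invoke simplicity of the zeros, so that the extremum $\bar r$ genuinely satisfies both $\bar r>a$ and $u(\bar r)\neq 0$ at once. The sign computation at $\bar r$ is then immediate. Intuitively, once $V>1$ the equation pushes $u$ away from the axis---convex where $u>0$ and concave where $u<0$---so it cannot turn around to generate a further crossing.
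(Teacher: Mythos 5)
Your proof is correct and takes essentially the same approach as the paper: both argue by contradiction, using Rolle's theorem between two zeros beyond $a$ to produce a critical point $\bar r>a$ with $u(\bar r)\neq 0$ and $u(\bar r)\,u''(\bar r)\le 0$, then evaluating the equation at $\bar r$ (where the first-order term vanishes) to obtain $u(\bar r)\,u''(\bar r)=(V(\bar r)-1)u(\bar r)^{2}>0$. Your write-up is in fact slightly more careful in its bookkeeping (passing to consecutive zeros, and citing the differential equation itself rather than the integral equation at the critical point, which is what the paper actually means).
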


As consequence of the lemma \ref{number of zeros in V>1}, the number of
zeros of $u$ can only increase by one in a neighborhood of $u_{0}$.

\begin{lem}
\label{lemma of the open ball of nods}For any $u_{0}\in \mathcal{N}_{m,n}$,
there exist $\varepsilon >0$ such that $\mathcal{B}\left( u_{0},\varepsilon
\right) \in \mathcal{N}_{m,n}\cup \mathcal{N}_{m,n+1}\,,$ where $\mathcal{B}%
\left( u_{0},\varepsilon \right) $ is the open ball centered in $u_{0}$ with
radius $\varepsilon $.

\begin{proof}
We choose $R>\max \left( a,r_{n}\right) $ where $r_{n}$ is the last zero of $%
u$. Since $u$ has simple zeros it follows from the continuous dependence on
initial values that for every $\widetilde{u_{0}}$ in a open ball $\mathcal{B}%
\left( u_{0},\varepsilon \right) $, $\widetilde{u}$ has $n$ zeros in the
interval $\left[ 0,R\right] $ and $\widetilde{V}\left( R\right) >1$. Finally
from lemma \ref{number of zeros in V>1} $\widetilde{u}$ cannot have more
than $n+1$ zeros.
\end{proof}
\end{lem}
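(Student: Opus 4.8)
The plan is to exploit the fact that, on a suitably chosen \emph{compact} interval, the solution depends continuously on the initial datum and has only simple zeros, so that the zeros in that interval are stable; the genuinely useful idea is to push the right endpoint past the point $a$ where $V=1$, so that Lemma~\ref{number of zeros in V>1} takes over on the tail. First I would let $r_{n}$ denote the largest zero of $u$ and recall that, because $u_{0}>0$, integrating \eqref{int-eq} shows $V$ is strictly increasing to infinity, so the point $a$ with $V(a)=1$ exists and is unique. I then fix $R>\max(a,r_{n})$, so that $V(R)>1$ and $u$ carries its full complement of $n$ zeros inside $[0,R]$.

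Next I would invoke continuous dependence of $(u,V)$ on the initial value $u_{0}$, in the $C^{1}$ topology on the compact set $[0,R]$. Since the $n$ zeros of $u$ are simple (a consequence of local existence and uniqueness), each is a transversal crossing and therefore persists: for $\widetilde{u_{0}}$ close enough to $u_{0}$ (in particular still positive, so that $\widetilde{V}$ is again increasing) the perturbed solution $\widetilde{u}$ has exactly one zero near each zero of $u$, and no further zeros in $[0,R]$ because $u$ is bounded away from $0$ off small neighbourhoods of its zeros. The same continuity gives $\widetilde{V}(R)>1$, hence the associated point $\widetilde{a}$ with $\widetilde{V}(\widetilde{a})=1$ satisfies $\widetilde{a}<R$.

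Finally I would close the count with Lemma~\ref{number of zeros in V>1}. Since $R>\widetilde{a}$, that lemma forbids more than one zero of $\widetilde{u}$ in $(\widetilde{a},\infty)$, and a fortiori at most one zero in $(R,\infty)$. Combined with exactly $n$ zeros in $[0,R]$ this yields $n\le(\text{total number of zeros of }\widetilde{u})\le n+1$, that is $\widetilde{u_{0}}\in\mathcal{N}_{m,n}\cup\mathcal{N}_{m,n+1}$, with $\varepsilon$ the radius produced in the preceding step.

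The delicate part is the second paragraph: one must guarantee \emph{simultaneously} that no existing zero is lost and that no spurious zero is created on $[0,R]$, which is precisely what simplicity of the zeros together with compactness of $[0,R]$ provide. By contrast, the tail $(R,\infty)$ is not controlled by continuous dependence at all — perturbed solutions may behave very differently there — so the whole argument hinges on having placed $R$ beyond $a$, letting Lemma~\ref{number of zeros in V>1} cap the extra zeros at one and thereby preventing the zero count from jumping by more than a single unit.
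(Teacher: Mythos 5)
Your proposal is correct and follows essentially the same route as the paper's proof: fix $R>\max(a,r_{n})$, use continuous dependence on the compact interval $[0,R]$ together with simplicity of the zeros to preserve exactly $n$ zeros and the condition $\widetilde{V}(R)>1$, then invoke Lemma \ref{number of zeros in V>1} to cap the zeros beyond $R$ at one. Your added detail on why no zeros are lost or created on $[0,R]$ makes explicit what the paper leaves implicit, but the argument is the same.
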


\begin{lem}
$u$ has a finite number of zeros.

\begin{proof}
It is sufficient to prove this assertion in the range $r\leq a$. We compare
equation (\ref{ODE-d-re}) to%
\begin{equation*}
\overline{u}^{\prime \prime }\left( r\right) +\frac{2m+d-1}{r}\overline{u}%
^{\prime }\left( r\right) +\overline{u}\left( r\right) =0,
\end{equation*}%
which admits the solution:%
\begin{equation*}
\overline{u}\left( r\right) =u_{0}\Gamma \left( m+\frac{d}{2}\right) \left( 
\frac{2}{r}\right) ^{\left( 2m+d-2\right) /2}J_{\left( 2m+d-2\right)
/2}\left( r\right) ,
\end{equation*}%
where $J_{\left( 2m+d-2\right) /2}$ is the Bessel function of first kind of
order $\left( 2m+d-2\right) /2$. By the Sturm's comparison theorem, see e.g. 
\cite{K89}, $u$ oscillate less than $\overline{u}$ \ in the range $r\leq a$.
Since $J_{\left( 2m+d-2\right) /2}$ has a finite number of zeros in this
range, the same properties also hold for $u$.
\end{proof}
\end{lem}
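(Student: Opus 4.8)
The plan is to reduce the problem to a compact interval and then apply a Sturm oscillation comparison against an explicitly solvable equation. First I would invoke Lemma \ref{number of zeros in V>1}: once $V$ exceeds $1$ the solution $u$ can acquire at most one further zero, so it suffices to bound the number of zeros in the range $r\le a$, where $a$ is determined by $V(a)=1$.

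On this interval I would record the sign information already established: $V(0)=0$, $V'\ge 0$ (from the integral equation \eqref{int-eq}), and $V(a)=1$, so that $0\le V(r)\le 1$ and hence $-1\le V(r)-1\le 0$ for $r\in[0,a]$. The point is that the coefficient multiplying $u$ in \eqref{ODE-d-re} is pinched between $0$ and $-1$, so $u$ should oscillate no faster than the solution of the equation in which $V-1$ is replaced by its most negative value $-1$, namely $\overline{u}''+\frac{2m+d-1}{r}\overline{u}'+\overline{u}=0$.

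Next I would put both equations in self-adjoint Sturm--Liouville form by multiplying through by the integrating factor $r^{2m+d-1}$, obtaining $(r^{2m+d-1}u')'+(1-V)r^{2m+d-1}u=0$ and $(r^{2m+d-1}\overline{u}')'+r^{2m+d-1}\overline{u}=0$. Since $(1-V)r^{2m+d-1}\le r^{2m+d-1}$ on $[0,a]$, the Sturm comparison theorem guarantees that between any two consecutive zeros of $u$ there lies a zero of $\overline{u}$; equivalently, $u$ oscillates no more than $\overline{u}$. The comparison solution regular at the origin (matching $\overline{u}(0)=u_{0}$, $\overline{u}'(0)=0$) is the Bessel function displayed in the statement, $\overline{u}(r)=u_{0}\,\Gamma(m+\tfrac{d}{2})(2/r)^{(2m+d-2)/2}J_{(2m+d-2)/2}(r)$, and a Bessel function has only finitely many zeros in a bounded interval. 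Hence $\overline{u}$, and therefore $u$, has finitely many zeros on $[0,a]$, which together with the first step yields finiteness of the total number of zeros.

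The main obstacle will be the bookkeeping around Sturm's theorem rather than any delicate estimate: one must confirm that the two equations share the same positive weight and leading coefficient $p(r)=r^{2m+d-1}$ so that the comparison applies verbatim, check that the singularity at $r=0$ does not spoil the argument (the regular Bessel solution is chosen precisely to respect the initial data, and near $0$ both solutions are smooth with the prescribed values), and verify the direction of the inequality $0\le 1-V\le 1$ so that the explicit equation is indeed the more oscillatory one. Everything else is the standard fact that $J_{\nu}$ has isolated zeros accumulating only at infinity.
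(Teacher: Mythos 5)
Your proposal is correct and follows essentially the same route as the paper: reduce to the interval $[0,a]$ via Lemma \ref{number of zeros in V>1}, then apply Sturm's comparison theorem against the Bessel-type equation $\overline{u}''+\frac{2m+d-1}{r}\overline{u}'+\overline{u}=0$, whose regular solution has finitely many zeros on a bounded interval. Your added bookkeeping (the self-adjoint form with weight $r^{2m+d-1}$, the inequality $0\le 1-V\le 1$, and the direction of the comparison) only makes explicit what the paper leaves implicit.
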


The reason of the notation $\mathcal{N}_{m,n}^{\infty }$ become from the
fact that the solutions $u$ in these sets goes to infinity as we will see in
the next lemma.

\begin{lem}
For every $u_{0}\in \mathcal{N}_{m,n}^{\infty }~$there exist $r_{c}\in
\left( 0,\infty \right] $ such that $\lim_{r\rightarrow r_{c}}\left\vert
u\left( r\right) \right\vert =\infty $.

\begin{proof}
We have seen in the lemma \ref{number of zeros in V>1} that, for large $r$,
solutions of \ref{ODE-d-re} have no zeros and, more precisely, are not
oscillatory. Then, supposing $\lim \inf_{r\rightarrow \infty }u\left(
r\right) >0$ or $\lim \sup_{r\rightarrow \infty }u\left( r\right) <0$ leads
to a contradiction. In the case $\lim \inf_{r\rightarrow \infty }u\left(
r\right) =l>0$, choosing $\overline{r}>a\ $sufficiently large and
integrating (\ref{int-eq}) gives:%
\begin{eqnarray*}
u^{\prime }\left( r\right) &=&\frac{1}{r^{2m+d-1}}\int_{\overline{r}%
}^{r}(V(s)-1)u(s)s^{2m+d-1}\ ds+\left( \frac{\overline{r}}{r}\right)
^{2m+d-1}u^{\prime }\left( \overline{r}\right) \\
&\geq &\frac{l~\left( V\left( \overline{r}\right) -1\right) }{2\left(
2m+d\right) }\left( r-\overline{r}\right) +\left( \frac{\overline{r}}{r}%
\right) ^{2m+1}u^{\prime }\left( \overline{r}\right) ~~~~~\forall r>%
\overline{r}.
\end{eqnarray*}%
The case $\lim \sup_{r\rightarrow \infty }u\left( r\right) <l$ can be proved
in the same manner.
\end{proof}
\end{lem}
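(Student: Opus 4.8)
The plan is to exploit the fact, recorded in Lemma~\ref{number of zeros in V>1}, that once $V(a)=1$ the function $u$ has at most one zero in $\{r>a\}$. Hence, on the maximal interval of existence $[0,r_{c})$, the solution is eventually of constant sign, say $u>0$ on $(\bar r,r_{c})$ for some $\bar r>a$ (the eventually-negative case being identical up to sign). For $r>a$ we have $V-1>0$, so the first equation of \eqref{ODE-d-re} may be rewritten as $\left(r^{2m+d-1}u'\right)'=(V-1)\,u\,r^{2m+d-1}>0$ wherever $u>0$. Thus $r^{2m+d-1}u'$ is strictly increasing on $(\bar r,r_{c})$, its sign changes at most once, and consequently $u$ is eventually monotone. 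In particular the limit $L:=\lim_{r\to r_{c}}u(r)$ exists in $[0,+\infty]$.

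It remains to show $L=+\infty$. First $L\neq 0$: otherwise $\lim_{r\to r_{c}}u(r)=0$, which means $u_{0}\in\mathcal{N}_{m,n}^{0}$, contradicting the hypothesis $u_{0}\in\mathcal{N}_{m,n}^{\infty}=\mathcal{N}_{m,n}\setminus\mathcal{N}_{m,n}^{0}$. Next, suppose toward a contradiction that $L\in(0,\infty)$; then necessarily $r_{c}=\infty$, since a solution with $u$ bounded also has $V$ bounded on a finite interval (as $V'$ is given by \eqref{int-eq}) and so can be continued, ruling out a finite maximal radius. Choosing $\bar r>a$ so large that $u(s)\geq L/2$ for $s\geq\bar r$, and using the monotonicity of $V$ so that $V(s)-1\geq V(\bar r)-1>0$, the integral identity \eqref{int-eq} yields the lower bound
\begin{equation*}
u'(r)\geq\frac{L\,(V(\bar r)-1)}{2(2m+d)}\,(r-\bar r)+\left(\frac{\bar r}{r}\right)^{2m+d-1}u'(\bar r),\qquad r>\bar r,
\end{equation*}
whose right-hand side tends to $+\infty$. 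Hence $u'(r)\to+\infty$ and $u(r)\to+\infty$, contradicting $L<\infty$. Therefore $L=+\infty$, and restoring signs gives $\lim_{r\to r_{c}}|u(r)|=+\infty$.

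The role of $r_{c}\in(0,\infty]$ is then transparent: $r_{c}=\infty$ when the solution is global, while $r_{c}<\infty$ covers the possibility that the coupled growth of $u$ and $V$ (large $u$ forces large $V'$, hence large $V$, which accelerates $u$) drives the maximal solution out of every compact set at a finite radius. I expect the main obstacle to lie not in the integral estimate but in the two structural steps that make it usable: establishing eventual monotonicity of $u$ so that the limit $L$ is well defined, and disposing of the finite-$r_{c}$ alternative, for which one invokes the standard continuation principle to identify blow-up of $u$ as the only mechanism by which the solution can fail to extend.
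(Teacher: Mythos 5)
Your proof is correct and follows essentially the same route as the paper's: eventual constant sign from Lemma \ref{number of zeros in V>1}, then the identical integral estimate derived from \eqref{int-eq} (with $u\geq L/2$ and $V(s)-1\geq V(\bar r)-1>0$) to rule out a finite nonzero limit, with the limit $0$ excluded by the definition of $\mathcal{N}_{m,n}^{\infty}$. Your extra steps --- eventual monotonicity via $\left(r^{2m+d-1}u'\right)'>0$ and the continuation principle for a finite maximal radius (which, note, is also what justifies the $L=0$ exclusion, since $\mathcal{N}_{m,n}^{0}$ is defined by a limit as $r\to\infty$) --- simply make explicit points the paper leaves implicit.
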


\begin{remark}
Since as $u$ has only simple zeros and the preceeding lemma, we deduce from the
continuous dependence on initial values that $\mathcal{N}_{m,n}^{\infty }$, $%
\mathcal{N}_{m}\mathcal{\,}$are open sets.
\end{remark}

Our main result theorem \ref{main-theorem} states that each $\mathcal{N}%
_{m,n}^{0}$ contain at least on element. Obviously, the set of the ground
states $\mathcal{N}_{m,0}^{0}$ is nonempty if both $\mathcal{N}_{m}$ and $%
\mathcal{N}_{m,0}^{\infty }$ are nonempty. As we will show in the proof of
the main-theorem, the bound states are the infima of the $\mathcal{N}_{m,n}$
sets. Thus, a necessary condition for the existence of bound states for each 
$n$ is these infima are ever strictly positive. We will show these
properties in the two following lemmas.

\begin{lem}
\label{P-lemma} The set $\mathcal{N}_{m,0}^{\infty }$ is non empty.

\begin{proof}
We want to show that $u_{0}\in \mathcal{N}_{m,0}^{\infty }$ for $u_{0}~$%
sufficiently large. Suppose on the contrary that $u_{0}\in \mathcal{N}_{m}%
\mathcal{\cup N}_{m,0}^{0}$ for all $u_{0}>0$\ and denote $\left] 0,R_{0}%
\right[ \,\ $the maximal interval where $u>0\,\ $and $u^{\prime }<0$. Let $%
r_{0}=\sqrt{2(2m+d)}$. We consider the function%
\begin{equation*}
f\left( r\right) :=u\left( r\right) -u_{0}\left( 1-\frac{r^{2}}{r_{0}^{2}}%
\right) .
\end{equation*}%
It satisfied the differential equation%
\begin{equation}
f^{\prime \prime }\left( r\right) +\frac{2m+d-1}{r}f^{\prime }\left(
r\right) =V\left( r\right) u\left( r\right) +u_{0}-u\left( r\right) .
\label{equation for f}
\end{equation}%
Since $f^{\prime }\left( 0\right) =f\left( 0\right) =0$ and the right
hand-side of (\ref{equation for f}) we conclude that $f\left( r\right) \geq
0 $ on $\left[ 0,R_{0}\right[ $. Hence%
\begin{equation}
u\left( r\right) \geq u_{0}\left( 1-\frac{r^{2}}{r_{0}^{2}}\right) ~~~\text{%
on}~\left[ 0,R_{0}\right[  \label{lower bound u}
\end{equation}%
and $r_{0}\leq R_{0}$.Using that $u$ is decreasing and from the equation (%
\ref{int-eq}) for $V^{\prime }$ we obtain the bound%
\begin{equation*}
V^{\prime }\left( r\right) \geq \frac{u\left( r\right) ^{2}}{2m+d}r^{2m+1}~~~%
\text{on}~\left[ 0,R_{0}\right[ .
\end{equation*}%
Integrating this inequality and using again that $u^{\prime }<0$ yields the
following estimate for $V:$%
\begin{equation}
V\left( r\right) \geq \frac{u\left( r\right) ^{2}}{r_{0}^{2}\left(
m+1\right) }r^{2\left( m+1\right) }~~~\text{on}~\left[ 0,R_{0}\right[ .
\label{lower bound v}
\end{equation}%
We want to show that $u^{\prime }\left( r_{0}\right) >0$ provided $u_{0}$
sufficiently large which yields the desired contradiction. Using (\ref{lower
bound u}), (\ref{lower bound v}) in equation (\ref{int-eq}) leads to%
\begin{eqnarray*}
u^{\prime }\left( r_{0}\right) &=&\frac{1}{r_{0}^{2m+d-1}}%
\int_{0}^{r_{0}}\left( V\left( r\right) -1\right) u\left( r\right)
r^{2m+d-1}dr \\
&\geq &\frac{1}{r_{0}^{2m+d+1}\left( m+1\right) }\int_{0}^{r_{0}}u\left(
r\right) ^{3}r^{4m+d+1}dr-\frac{2u_{0}}{r_{0}} \\
&\geq &\frac{u_{0}^{3}}{r_{0}^{2m+d+1}\left( m+1\right) }\int_{0}^{r_{0}}%
\left( 1-\frac{r^{2}}{r_{0}^{2}}\right) ^{3}r^{4m+d+1}dr-\frac{2u_{0}}{r_{0}}
\\
&\geq &\frac{u_{0}}{r_{0}}\left( \frac{u_{0}^{2}}{m+1}r_{0}^{2\left(
m+1\right) }\int_{0}^{1}\left( 1-t^{2}\right) ^{3}t^{4m+d+1}dt-2\right) .
\end{eqnarray*}%
We conclude that $u^{\prime }\left( r_{0}\right) >0$ for $u_{0}$
sufficiently large which contradicts the assumption that $\mathcal{N}_{m}%
\mathcal{\cup N}_{m,0}^{0}=\left] 0,\infty \right[ $.
\end{proof}
\end{lem}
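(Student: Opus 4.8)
The plan is to argue by contradiction. If $\mathcal{N}_{m,0}^{\infty}$ were empty then, since every initial value produces a solution with finitely many zeros, one would have $u_{0}\in\mathcal{N}_{m}\cup\mathcal{N}_{m,0}^{0}$ for all $u_{0}>0$; by the observation preceding the definition, such a solution stays strictly decreasing and positive on its maximal positivity interval $\left(0,R_{0}\right)$ (with $R_{0}=+\infty$ in the decaying case). I would then show this is impossible once $u_{0}$ is large: the trajectory must turn upward while still positive, and since $V$ is strictly increasing and passes $1$ at any such turning point, \eqref{int-eq} forces $u$ to keep increasing without vanishing, i.e. $u_{0}\in\mathcal{N}_{m,0}^{\infty}$ after all.

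To quantify the turning I would trap $u$ from below by a parabola tuned to the origin. Since $u^{\prime\prime}(0)=-u_{0}/(2m+d)$, the natural barrier is $p(r)=u_{0}\left(1-r^{2}/r_{0}^{2}\right)$ with $r_{0}=\sqrt{2(2m+d)}$, so that $p(0)=u_{0}$, $p^{\prime}(0)=0$ and $p^{\prime\prime}(0)=u^{\prime\prime}(0)$. Setting $f=u-p$, a direct substitution into \eqref{ODE-d-re} gives $f^{\prime\prime}+\frac{2m+d-1}{r}f^{\prime}=Vu+u_{0}-u$ with $f(0)=f^{\prime}(0)=0$. On $\left(0,R_{0}\right)$ the right-hand side is nonnegative, because there $0<u\le u_{0}$ (as $u$ decreases) and $V\ge 0$; multiplying by $r^{2m+d-1}$ shows $\left(r^{2m+d-1}f^{\prime}\right)^{\prime}\ge 0$, whence $f^{\prime}\ge 0$ and $f\ge 0$. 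This yields $u(r)\ge u_{0}\left(1-r^{2}/r_{0}^{2}\right)$ on $\left(0,R_{0}\right)$ and, in particular, $r_{0}\le R_{0}$.

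With the quadratic lower bound in hand I would propagate it to $V$: inserting it into the integral formula \eqref{int-eq} for $V^{\prime}$ and using the monotonicity of $u$ produces a bound of the form $V(r)\ge c\,u(r)^{2}r^{2(m+1)}$ on $\left(0,R_{0}\right)$. Feeding both lower bounds into the integral representation of $u^{\prime}(r_{0})$ in \eqref{int-eq}, the $Vu$ contribution scales like $u_{0}^{3}$ while the $-u$ contribution is only linear in $u_{0}$, leaving an estimate of the shape $\frac{u_{0}}{r_{0}}\bigl(Cu_{0}^{2}-2\bigr)$ with $C>0$ an elementary beta-type integral. Hence $u^{\prime}(r_{0})>0$ for $u_{0}$ large, contradicting that $u$ decreases on $\left(0,R_{0}\right)\supseteq(0,r_{0}]$.

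The only genuinely creative point, and the step I expect to be the main obstacle, is the barrier: one must choose the parabola so that its curvature matches $u^{\prime\prime}(0)$ and then verify that the forcing term $Vu+u_{0}-u$ is nonnegative. The latter holds precisely because $u\le u_{0}$ on the decreasing interval, which is exactly the contradiction hypothesis, so the argument is self-consistent. Everything downstream — the comparison deduction $f\ge 0$, the lower bound on $V$, and the domination of the cubic term over the linear one — is routine bookkeeping once the constants are tracked.
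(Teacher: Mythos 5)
Your proposal is correct and follows essentially the same argument as the paper: the same contradiction hypothesis, the same parabolic barrier $u_{0}\left(1-r^{2}/r_{0}^{2}\right)$ with $r_{0}=\sqrt{2(2m+d)}$, the same comparison function $f$ and sign argument for its forcing term, the same induced lower bound on $V$, and the same cubic-versus-linear estimate showing $u^{\prime}(r_{0})>0$ for large $u_{0}$. No meaningful differences to report.
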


\begin{lem}
\label{lemma on the lower bound of nods}For any $N>0$, there exist $%
\widetilde{u_{0}}>0$ such that for each $u_{0}<\widetilde{u_{0}}$, the
corresponding solution has at least $N$ zeros i.e. $u_{0}\in \cup _{n\geq N}%
\mathcal{N}_{m,n}$

\begin{proof}
First we consider equation (\ref{ODE-d-re})\ for $u_{0}>0~$in the range $%
r\in \left[ 0,b\right] $, where $b$ is the value at which $V\left( b\right)
=1/2$. We introduce a Lyapunov function $F\left( r\right) $ defined by%
\begin{equation*}
F\left( r\right) =u^{2}\left( 1-V\right) +u^{\prime }{}^{2}.
\end{equation*}%
Then $F\left( 0\right) =u_{0}$ and 
\begin{equation*}
F^{\prime }\left( r\right) =-\frac{2\left( 2m+d-1\right) }{r}u^{\prime
}\left( r\right) ^{2}-V^{\prime }\left( r\right) u\left( r\right) ^{2}\leq 0.
\end{equation*}%
Hence $F\left( 0\right) \geq F\left( r\right) $ for all $r\in \left[ 0,b%
\right] $ and, using $V\left( r\right) \leq 1$ we therefore have%
\begin{equation*}
2u_{0}^{2}\geq u\left( r\right) ^{2}.
\end{equation*}%
Then from the equation \ref{int-eq} we find%
\begin{equation*}
\frac{1}{2}=V\left( b\right) =\int_{0}^{b}\int_{0}^{t}\frac{u\left( s\right)
^{2}s^{2m+d-1}}{t}dsdt\leq \frac{u_{0}^{2}}{\left( m+1\right) \left(
2m+d\right) }b^{2\left( m+1\right) },
\end{equation*}%
hence%
\begin{equation*}
b\geq \left( \frac{\left( m+1\right) \left( 2m+d\right) }{2u_{0}^{2}}\right)
^{\frac{1}{2\left( m+1\right) }}.
\end{equation*}%
We can now compare the main equation (\ref{ODE-d-re})\ to the following%
\begin{equation*}
\overline{u}^{\prime \prime }\left( r\right) +\frac{2m+d-1}{r}\overline{u}%
\left( r\right) +\frac{1}{2}\overline{u}\left( r\right) =0
\end{equation*}%
which admit the solution:%
\begin{equation*}
\overline{u}\left( r\right) =u_{0}\Gamma \left( m+\frac{d}{2}\right) \left( 
\frac{2\sqrt{2}}{r}\right) ^{\left( 2m+d-2\right) /2}J_{\left( 2m+d-2\right)
/2}\left( \frac{r}{\sqrt{2}}\right) ,
\end{equation*}%
where $J_{\left( 2m+d-2\right) /2}$ is the Bessel function of first kind. By
Sturm's comparison theorem, \cite{K89}, $u$ oscillates faster than $%
\overline{u}$ and has at least the same number of zero than $J_{\left(
2m+d-2\right) /2}\left( r/\sqrt{2}\right) $ in the range $0\leq r\leq \left( 
\frac{\left( m+1\right) \left( 2m+d\right) }{2u_{0}^{2}}\right) ^{\frac{1}{%
2\left( m+1\right) }}$. Therefore we can choose $u_{0}$ sufficiently small
such the number of zero of $u$ is greater or equal than $N$.
\end{proof}
\end{lem}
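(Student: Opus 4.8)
The plan is to exploit the fact that for small $u_0$ the equation for $u$ is, to leading order, a linear oscillator throughout the region where $V$ stays below $1$, and that this oscillatory region becomes arbitrarily large as $u_0\to 0$. Concretely, I would fix the radius $b=b(u_0)$ determined by $V(b)=1/2$ (which exists and is finite since $V$ is increasing to $\infty$ for $d\le 2$), show that $b\to\infty$ as $u_0\to 0$, and then invoke Sturm's comparison theorem on $[0,b]$ against a constant-coefficient Bessel equation whose regular solution has arbitrarily many zeros once $b$ is large.

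First I would bound $u$ on $[0,b]$ uniformly in terms of $u_0$. Introduce the energy $F(r)=u^2(1-V)+(u')^2$; using \eqref{ODE-d-re} one computes $F'(r)=-\frac{2(2m+d-1)}{r}(u')^2-V'u^2\le 0$ because $V'\ge 0$. Hence $F(r)\le F(0)=u_0^2$, and since $1-V\ge 1/2$ on $[0,b]$ this gives $u(r)^2\le 2u_0^2$ throughout $[0,b]$; in particular the energy bound rules out any blow-up of $(u,u')$ on this interval.

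Next I would convert this size bound into a lower bound on $b$. Feeding $u^2\le 2u_0^2$ into the integral equation \eqref{int-eq} for $V'$ and integrating twice yields $\tfrac12=V(b)\le C(m,d)\,u_0^2\,b^{2(m+1)}$, so that $b\ge \bigl(C'(m,d)/u_0^2\bigr)^{1/(2(m+1))}$, which tends to $\infty$ as $u_0\to 0$. On $[0,b]$ we have $1-V\ge 1/2$, so writing $u$ in self-adjoint form $(r^{2m+d-1}u')'+(1-V)r^{2m+d-1}u=0$ I would compare it with the solution $\overline{u}$ of $\overline{u}''+\frac{2m+d-1}{r}\overline{u}'+\frac12\overline{u}=0$, namely the rescaled Bessel function $J_{(2m+d-2)/2}(r/\sqrt2)$ that is regular at the origin. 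Since $1-V\ge \frac12$, Sturm's comparison theorem shows that $u$ oscillates at least as fast as $\overline{u}$ on $[0,b]$, hence has at least as many zeros there.

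Finally, since $J_{(2m+d-2)/2}(r/\sqrt2)$ has infinitely many zeros with asymptotic spacing $\pi\sqrt2$, the count of its zeros in $[0,b]$ grows without bound as $b\to\infty$; choosing $\widetilde{u_0}$ small enough to push $b$ beyond the $N$-th zero of $\overline{u}$ then finishes the proof. The hard part will be the self-consistency of the estimates: the amplitude bound $u^2\le 2u_0^2$ is exactly what is needed to force $b$ large, while the comparison must be carried out precisely on the $u_0$-dependent interval $[0,b]$ whose length is being estimated. The delicate point is verifying that the lower bound on $b$ genuinely diverges as $u_0\to 0$ rather than saturating, since that divergence is what converts ``small amplitude'' into ``arbitrarily many zeros.''
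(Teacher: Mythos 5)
Your proposal is correct and follows essentially the same route as the paper: the same Lyapunov function $F=u^{2}(1-V)+(u')^{2}$ giving the amplitude bound $u^{2}\le 2u_{0}^{2}$ on $[0,b]$, the same resulting lower bound $b\gtrsim u_{0}^{-1/(m+1)}$, and the same Sturm comparison with the regular Bessel solution of $\overline{u}''+\frac{2m+d-1}{r}\overline{u}'+\frac12\overline{u}=0$. In fact your write-up fixes two typographical slips in the paper (its $F(0)=u_{0}$ should be $u_{0}^{2}$, and its comparison equation omits the prime on the first-order term), so nothing further is needed.
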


Hence we have proved by the preceding lemmas the existence of ground states
(see also \cite{S2008}). We are now in position to prove our main existence
results.

\begin{thm}
\label{existence-gs} For any $m\geq 0$ and $n\in 
\mathbb{N}
$, $\alpha _{m,n}:=\inf \mathcal{N}_{m,n}\in \mathcal{N}_{m,n}^{0}$. In
addition for every $m\geq 0$, the solution $u_{m,0}\in \mathcal{N}%
_{m,0}^{0}~ $satisfies $u_{m,0}(r)>0$, $u_{m,0}^{\prime }(r)<0$ on $%
]0,\infty \lbrack $.

\begin{proof}
We prove this first statement by induction with the following hypothesis:
For any $n\in 
\mathbb{N}
,$ $\alpha _{n}:=\inf \mathcal{N}_{m,n}\in \mathcal{N}_{m,n}^{0}$. We have
seen in the lemma \ref{lemma on the lower bound of nods} that $\alpha _{0}>0$%
. Since $\mathcal{N}_{m,0}^{\infty }$ and $\mathcal{N}_{m}$ are open sets, $%
\alpha _{0}\in \mathcal{N}_{m,0}^{0}$ and this result is true for $n=0$. By
hypothesis $\alpha _{n}\in \mathcal{N}_{m,n}^{0}$ and the lemma \ref{lemma
of the open ball of nods} applied at the point $\alpha _{n}$ states there
exist $\varepsilon >0$ such that $\mathcal{B}\left( \alpha _{n},\varepsilon
\right) \subset \mathcal{N}_{m,n}\cup \mathcal{N}_{m,n+1}$. Since $\alpha
_{n}$ is the infimum of $\mathcal{N}_{m,n}$ we have the following result%
\begin{equation*}
\mathcal{N}_{m,n+1}\neq \emptyset \text{ and }\alpha _{n+1}<\alpha _{n}
\end{equation*}%
and it follows from the lemma \ref{lemma on the lower bound of nods} that $%
\alpha _{n+1}>0$.

Now we suppose on the contrary that $\alpha _{n+1}\notin \mathcal{N}_{m,n+1}$%
. Since $\alpha _{n+1}<\alpha _{n}$, lemma \ref{lemma of the open ball of
nods} implies there exist an open ball around $\alpha _{n+1}~$which is not
in $\mathcal{N}_{n+1}$ in contradiction with the definition of $\alpha
_{n+1} $. Finally, since $\mathcal{N}_{m,n}^{\infty }$ is an open set, it
follows that $\alpha _{n+1}\in \mathcal{N}_{m,n}^{0}$.

It remains to prove the last part of the theorem. We have seen that every
solution $u\notin \mathcal{N}_{m,0}^{\infty }$ is strictly decreasing in the
maximal interval $\left( 0,R\right) $ where $u>0.$ Since the are only simple
zeros, we have\ in particular for the ground states: $u(r)>0$, $u^{\prime
}(r)<0$ on $]0,\infty \lbrack $.
\end{proof}
\end{thm}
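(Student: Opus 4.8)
The plan is to argue by induction on $n$, treating $\alpha_{m,n}:=\inf\mathcal{N}_{m,n}$ as the boundary point separating the ``decaying'' set $\mathcal{N}_{m,n}^{0}$ from its neighbours, and to exploit the topological facts already assembled: the sets $\mathcal{N}_{m,n}^{\infty}$ and $\mathcal{N}_{m}$ are open; a small perturbation of an initial value with exactly $n$ zeros still has $n$ or $n+1$ zeros (Lemma~\ref{lemma of the open ball of nods}); arbitrarily small $u_{0}$ produces arbitrarily many zeros (Lemma~\ref{lemma on the lower bound of nods}), so each infimum is strictly positive; and large $u_{0}$ lies in $\mathcal{N}_{m,0}^{\infty}$ (Lemma~\ref{P-lemma}). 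Throughout I fix $m$ and abbreviate $\alpha_{n}:=\alpha_{m,n}$.

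For the base case I first note that Lemma~\ref{lemma on the lower bound of nods} forces every sufficiently small $u_{0}$ into $\mathcal{N}_{m}$, whence $\alpha_{0}>0$. To place $\alpha_{0}$ in $\mathcal{N}_{m,0}^{0}$ I use a boundary argument on the partition of $\mathbb{R}_{+}$: for $u_{0}<\alpha_{0}$ the solution has at least one zero, so $u_{0}\in\mathcal{N}_{m}$, while by definition of the infimum there are points of $\mathcal{N}_{m,0}$ arbitrarily close to $\alpha_{0}$ from above. Consequently $\alpha_{0}$ can be interior neither to the open set $\mathcal{N}_{m}$ (points just above it have no zeros) nor to the open set $\mathcal{N}_{m,0}^{\infty}$ (points just below it have a zero); since these two sets together with $\mathcal{N}_{m,0}^{0}$ exhaust $\mathbb{R}_{+}$, the only possibility left is $\alpha_{0}\in\mathcal{N}_{m,0}^{0}$.

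For the inductive step, assume $\alpha_{n}\in\mathcal{N}_{m,n}^{0}$. Applying Lemma~\ref{lemma of the open ball of nods} at $\alpha_{n}$ yields a ball $\mathcal{B}(\alpha_{n},\varepsilon)\subset\mathcal{N}_{m,n}\cup\mathcal{N}_{m,n+1}$; since $\alpha_{n}=\inf\mathcal{N}_{m,n}$, the points just below $\alpha_{n}$ are not in $\mathcal{N}_{m,n}$ and hence populate $\mathcal{N}_{m,n+1}$, so $\mathcal{N}_{m,n+1}\neq\emptyset$ and $0<\alpha_{n+1}<\alpha_{n}$, positivity again from Lemma~\ref{lemma on the lower bound of nods}. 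To show $\alpha_{n+1}\in\mathcal{N}_{m,n+1}$, suppose instead $\alpha_{n+1}\in\mathcal{N}_{m,k}$ with $k\neq n+1$: applying the open-ball lemma at $\alpha_{n+1}$ confines a whole neighbourhood to $\mathcal{N}_{m,k}\cup\mathcal{N}_{m,k+1}$, and since points arbitrarily close above $\alpha_{n+1}$ lie in $\mathcal{N}_{m,n+1}$ this forces $k\in\{n,n+1\}$, while $\alpha_{n+1}<\alpha_{n}=\inf\mathcal{N}_{m,n}$ excludes $k=n$, a contradiction. Finally, openness of $\mathcal{N}_{m,n+1}^{\infty}$ rules out $\alpha_{n+1}\in\mathcal{N}_{m,n+1}^{\infty}$, because an interior point would force a left-neighbourhood into $\mathcal{N}_{m,n+1}$, contradicting the definition of the infimum; hence $\alpha_{n+1}\in\mathcal{N}_{m,n+1}^{0}$, closing the induction.

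The concluding monotonicity statement is then immediate: since $\alpha_{0}\in\mathcal{N}_{m,0}^{0}$, in particular $\alpha_{0}\notin\mathcal{N}_{m,0}^{\infty}$, the observations made before Lemma~\ref{number of zeros in V>1} show that $u_{m,0}$ is strictly decreasing on the maximal interval where it is positive; as it has no zeros and tends to $0$, that interval is all of $]0,\infty[$, giving $u_{m,0}(r)>0$ and $u_{m,0}^{\prime}(r)<0$ there. The main obstacle I anticipate is the bookkeeping of the zero count at the boundary points $\alpha_{n}$: one must combine the one-sided jump of the zero number under perturbation with the openness of both $\mathcal{N}_{m,n}^{\infty}$ and $\mathcal{N}_{m}$ to be certain each infimum lands in the decaying set rather than escaping into a blow-up set or skipping a value of $n$.
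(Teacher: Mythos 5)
Your proof is correct and follows essentially the same route as the paper's: the same induction on $n$, with the base case settled by strict positivity of $\alpha_{0}$ (lower-bound lemma) plus openness of $\mathcal{N}_{m}$ and $\mathcal{N}_{m,0}^{\infty }$, and the inductive step by applying the open-ball lemma first at $\alpha _{n}$ (to get $\mathcal{N}_{m,n+1}\neq \emptyset $ and $0<\alpha _{n+1}<\alpha _{n}$) and then at $\alpha _{n+1}$ to exclude $\alpha _{n+1}\in \mathcal{N}_{m,k}$ for $k\neq n+1$, finishing with openness of $\mathcal{N}_{m,n+1}^{\infty }$ and the monotonicity observation for the ground state. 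Your explicit bookkeeping of the cases $k\in \left\{ n,n+1\right\} $ merely spells out what the paper compresses into a single sentence.
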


\begin{rem}
We also prove with the preceding theorem that for all $m\geq 0$ the sequence 
$\left\{ \alpha _{n}\right\} _{n\in 
\mathbb{N}
}$of the infima is strictly decreasing as illustrate in the following figure.
\end{rem}

\begin{center}
\FRAME{itbpFU}{4.1719in}{2.7415in}{0in}{\Qcb{The curves represente the sets $%
\mathcal{N}_{m,n}$ for $n=0,1,2,3,4,5$ as a function of $m$ in dimension $d=2
$.}}{}{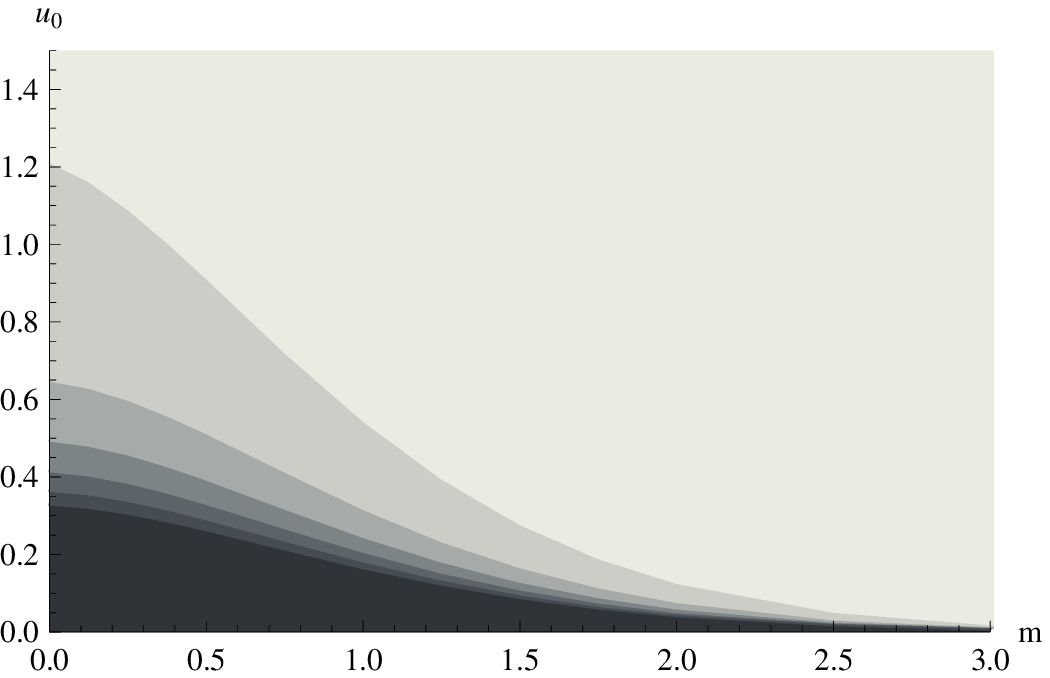}{\special{language "Scientific Word";type
"GRAPHIC";maintain-aspect-ratio TRUE;display "USEDEF";valid_file "F";width
4.1719in;height 2.7415in;depth 0in;original-width 4.1632in;original-height
2.7268in;cropleft "0";croptop "1";cropright "1";cropbottom "0";filename
'ill_ensemb_fct_m.eps';file-properties "XNPEU";}}
\end{center}

\section{Uniqueness of ground states}

In this section we prove that $\mathcal{N}_{m,0}^{0}$ has exactly one
element. First of all, we show that if $\mathcal{N}_{m,0}^{0}$ had more than
one element the corresponding solutions cannot cross. We restate the
no-crossing properties of \cite{CSV2008} in the following lemma.

\begin{lem}
\label{no-cross} Let $u_{2}(0)>u_{1}(0)>0$ and suppose that $%
u_{2}(r),u_{1}(r)$ exist on $[0,R]$ such that $u_{1}(r)\geq 0$ on $[0,R]$.
Then $u_{2}(r)>u_{1}(r)$ for all $r\in [0,R]$.
\end{lem}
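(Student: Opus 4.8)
The plan is to rule out a crossing by means of a weighted Wronskian argument, the kind of identity that makes the first equation in \eqref{ODE-d-re} behave like a Sturm--Liouville problem. Write $(u_1,V_1)$ and $(u_2,V_2)$ for the two solutions of \eqref{ODE-d-re} subject to \eqref{ini} with $u_2(0)>u_1(0)>0$, and set $w:=u_2-u_1$ and $q:=V_2-V_1$. Since $w(0)>0$, continuity gives $w>0$ on a maximal initial interval; assuming the conclusion fails, I would let $\bar r\in(0,R]$ be the first zero of $w$, so that $w>0$ on $[0,\bar r)$, $w(\bar r)=0$, and hence $w'(\bar r)\le 0$.

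First I would control the sign of $q$. On $[0,\bar r)$ we have $u_2>u_1\ge 0$, so $u_2^2>u_1^2$, and the integral equation \eqref{int-eq} for $V'$ yields $q'(r)=V_2'(r)-V_1'(r)>0$ on $(0,\bar r)$; together with $q(0)=0$ this gives $q>0$ on $(0,\bar r]$. This is the one place the hypothesis $u_1\ge 0$ is essential: it is what turns $u_2>u_1$ into the pointwise domination $u_2^2>u_1^2$ of the source terms.

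The key step is the Wronskian $\mathcal{W}(r):=r^{2m+d-1}\bigl(u_1u_2'-u_2u_1'\bigr)$. Multiplying the first equation in \eqref{ODE-d-re} by $r^{2m+d-1}$ puts it in divergence form $(r^{2m+d-1}u_i')'=r^{2m+d-1}(V_i-1)u_i$, and a direct computation collapses the cross terms to give $\mathcal{W}'(r)=r^{2m+d-1}u_1u_2\,q$. Because $u_1'(0)=u_2'(0)=0$ we have $\mathcal{W}(0)=0$, and on $(0,\bar r)$ the integrand $r^{2m+d-1}u_1u_2\,q$ is nonnegative (all factors are $\ge 0$, using $q>0$, $u_1\ge 0$, $u_2>0$) and strictly positive near $r=0$, where $u_1(0)>0$. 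Integrating therefore gives $\mathcal{W}(\bar r)>0$.

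The contradiction then comes from evaluating $\mathcal{W}$ at the crossing. Since $u_1(\bar r)=u_2(\bar r)=:u_*\ge 0$, the Wronskian collapses to $\mathcal{W}(\bar r)=\bar r^{2m+d-1}u_*\,w'(\bar r)$, and with $\bar r^{2m+d-1}>0$, $u_*\ge 0$ and $w'(\bar r)\le 0$ this forces $\mathcal{W}(\bar r)\le 0$, contradicting $\mathcal{W}(\bar r)>0$. Observe that this single inequality disposes of both the case $u_*>0$ and the degenerate case $u_*=0$ at once, so no case split is needed. Hence $w$ has no zero on $[0,R]$ and $u_2>u_1$ throughout. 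I expect the only delicate point to be the behavior at $r=0$: one must check, using $w'(0)=0$ and the local regularity of solutions guaranteed at the start of Section 2, that $\mathcal{W}(0)=0$ and that the boundary term vanishes when integrating the divergence-form identity, which holds whether or not $2m+d-1$ vanishes, the only borderline case being $d=1$, $m=0$.
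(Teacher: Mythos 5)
Your proof is correct and takes essentially the same route as the paper: both arguments hinge on the Wronskian $u_1u_2'-u_2u_1'$ (your $\mathcal{W}$ is exactly $r^{2m+d-1}$ times the paper's $w$), the positivity of $V_2-V_1$ extracted from \eqref{int-eq} using $u_2>u_1\ge 0$, and the evaluation of the weighted Wronskian at the first crossing point $\bar r$, yielding the contradiction between $\mathcal{W}(\bar r)>0$ (from integrating the identity) and $\mathcal{W}(\bar r)\le 0$ (from $u_1(\bar r)=u_2(\bar r)\ge 0$ and $u_2'(\bar r)-u_1'(\bar r)\le 0$). The only cosmetic difference is that you integrate the divergence-form identity explicitly, whereas the paper states the equivalent fact that $w(r)r^{2m+d-1}$ is strictly increasing and vanishes at $r=0$.
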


\begin{proof}
We consider the Wronskian of $u_{1},u_{2}$ defined by 
\begin{equation}
w(r)=u_{2}^{\prime }(r)u_{1}(r)-u_{1}^{\prime }(r)u_{2}(r).
\label{Wronskian}
\end{equation}%
Then $w$ satisfies the differential equation 
\begin{equation}
w^{\prime }+\frac{2m+d-1}{r}\;w=(V_{2}-V_{1})u_{1}u_{2}.
\label{wronskian-ode}
\end{equation}

Suppose there is $\bar{r}\in \lbrack 0,R]$ such that $u_{2}(r)>u_{1}(r)$ on $%
[0,\bar{r}[$ and $u_{1}(\bar{r})=u_{2}(\bar{r})\geq 0$. Then 
\begin{equation*}
w(\bar{r})=(u_{2}^{\prime }(\bar{r})-u_{1}^{\prime }(\bar{r}))u_{1}(\bar{r}%
)\leq 0.
\end{equation*}%
On the other hand we have 
\begin{equation*}
V_{2}^{\prime }(r)-V_{1}^{\prime }(r)=\frac{1}{r^{d-1}}%
\int_{0}^{r}(u_{2}^{2}(s)-u_{1}^{2}(s))s^{2m+d-1}\;ds>0
\end{equation*}%
on $]0,\bar{r}]$ and therefore $V_{2}(r)>V_{1}(r)$ on $]0,\bar{r}]$. We
conclude then from the differential equation \eqref{wronskian-ode} for $w$
that $w\left( r\right) r^{2m+d-1}$ is strictly increasing on $]0,\bar{r}]$
and since $w(0)=0$ we must have $w(\bar{r})>0$ which is the desired
contradiction.
\end{proof}

\begin{rem}
\label{remark on the intervalls}From the no-crossing property stated in
lemma \ref{no-cross} it follows immediately that $\mathcal{N}_{m},$ $%
\mathcal{N}_{m,0}^{\infty }$ are intervals. More precisely, $\mathcal{N}%
_{m}=]0,a[,\mathcal{N}_{m,0}^{\infty }=]b,\infty \lbrack $ with $0\leq a\leq
b\leq \infty \lbrack $. Uniqueness of ground states is then equivalent to $%
a=b$.
\end{rem}

The important conclusion from lemma \ref{no-cross} is that two different
ground state solutions cannot intersect. From the differential equation %
\eqref{wronskian-ode} for their Wronskian $w(r)$ we see that $w(r)r^{2m+d-1}$
is a nonnegative strictly increasing function. However, we shall prove in
the sequel that $w(r)r^{2m+d-1}$ vanishes at infinity which yields the
desired contradiction. Therefore we have to analyze the decay properties of
ground states at infinity.

\begin{lem}
\label{G-riccati-decay} Let $u_{0}\in \mathcal{N}_{m,0}^{0}$. Then

\begin{equation*}
\underset{r\longrightarrow \infty }{\lim }-\frac{u^{\prime }}{u}V^{-\frac{1}{%
2}}=1.
\end{equation*}%
Consequently, for any $\kappa \in \left( 0,1\right) $, 
\begin{equation*}
\underset{r\longrightarrow \infty }{\lim \sup }\;u(r)e^{\kappa
\int_{0}^{r}V^{1/2}\left( s\right) ds}<\infty .
\end{equation*}
\end{lem}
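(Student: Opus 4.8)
The plan is to read off the asymptotics of $u$ from those of $V$ through the Riccati (logarithmic-derivative) variable. By Theorem~\ref{existence-gs} a ground state satisfies $u>0$ and $u'<0$ on $]0,\infty[$, so $p:=-u'/u>0$ is well defined, and from \eqref{ODE-d-re} it solves
\begin{equation*}
p'=p^{2}-\frac{2m+d-1}{r}\,p-(V-1).
\end{equation*}
Since $V\to\infty$, the balance $p^{2}\approx V$ suggests $p\,V^{-1/2}\to1$, which is exactly the first claim. The difficulty is that the relevant stationary curve $p=\sqrt{V-1}$ is \emph{repelling} for this equation (the linearisation $\partial_{p}=2p-\tfrac{2m+d-1}{r}$ is positive near it), so one cannot argue by attraction; instead I would trap $p$ between two moving fences and invoke the ground-state constraints to exclude every escape.

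For the upper bound take $\bar p:=(1+\eps)\sqrt V$ and check it is a strict sub-solution for large $r$: one has $\bar p'=(1+\eps)V'/(2\sqrt V)$, while the right-hand side evaluated at $\bar p$ equals $[(1+\eps)^{2}-1]V+1-\tfrac{2m+d-1}{r}(1+\eps)\sqrt V$, whose leading term $[(1+\eps)^{2}-1]V\to\infty$ dominates $\bar p'$. Hence if $p(r_{0})\ge\bar p(r_{0})$ at a single large $r_{0}$ then $p>\bar p$ for all $r>r_{0}$, and on that range $V\le p^{2}/(1+\eps)^{2}$ gives $p'\ge\tfrac12\delta_{\eps}\,p^{2}$ with $\delta_{\eps}:=1-(1+\eps)^{-2}>0$; this forces $p\to\infty$ at a finite $r$, i.e. a zero of $u$, contradicting $u>0$. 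Therefore $\limsup p\,V^{-1/2}\le1$. Symmetrically, $\underline p:=(1-\eps)\sqrt V$ is a strict super-solution, so $p\le\underline p$ at a large point would drive $p-\underline p\to-\infty$ fast enough to eventually make $p<0$, i.e. $u'>0$, contradicting $u'<0$; hence $\liminf p\,V^{-1/2}\ge1$. Letting $\eps\downarrow0$ proves $-\tfrac{u'}{u}V^{-1/2}\to1$.

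Both fence computations are legitimate only once the error term $\bar p'=O(V'/\sqrt V)$ is negligible against the $O(V)$ gap opened by $(1\pm\eps)^{2}-1$, i.e. once $V'=o(V^{3/2})$ (for which $V'/V\to0$ is more than enough). Securing this slow growth of the potential is, I expect, the main obstacle, since the crude bound $u\le u_{0}$ is too weak in the borderline cases (notably $d=2$ with small $m$). I would first extract the monotonicity of $r^{2m+d-1}u'$ from \eqref{int-eq}, deduce a preliminary algebraic decay of $u$, and then bootstrap through the equation — whose forcing $(V-1)u$ is self-correcting — together with the Lyapunov functional $F=(u')^{2}-u^{2}(V-1)$ to bound $\int^{\infty}V'u^{2}$ and hence the growth of $V$. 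Granting the limit, the final estimate is immediate: given $\kappa\in(0,1)$ choose $R$ with $-u'/u\ge\kappa\,V^{1/2}$ for $r>R$; integrating $-(\log u)'\ge\kappa V^{1/2}$ yields $u(r)\le u(R)\exp\!\big(-\kappa\int_{R}^{r}V^{1/2}\big)$, so that $u(r)\,e^{\kappa\int_{0}^{r}V^{1/2}}\le u(R)\,e^{\kappa\int_{0}^{R}V^{1/2}}$ stays bounded, giving the stated finite $\limsup$.
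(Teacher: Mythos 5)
Your Riccati/fence skeleton is essentially the paper's own argument: the paper works with $z=-\frac{u'}{u}V^{-1/2}$, shows via the direction field that $z$ cannot enter the region $\{r\ge\tilde r,\ z\ge 2\}$ without blowing up at finite $r$ (impossible, since $u>0$ globally makes $z$ finite everywhere), and then identifies the limit by l'Hospital; your two fences $(1\pm\varepsilon)\sqrt V$ are a legitimate variant that would even avoid l'Hospital, and your final integration step is exactly the paper's. However, there is a genuine gap, and you flag it yourself: every version of this argument needs the drift and error terms to be dominated by the $O(V)$ gap, i.e. one needs $V'/V\to 0$ (the paper uses precisely $\frac{2m+d-1}{r}+\frac{V'}{2V}\le\frac{3m+d}{r}\le\frac12 V^{1/2}$ for $r\ge\tilde r$). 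You do not prove this, and the route you sketch is not viable as stated: since $F=(u')^{2}+u^{2}(1-V)$ is decreasing with $F'=-\frac{2(2m+d-1)}{r}(u')^{2}-V'u^{2}$, the best it yields is $\int_{0}^{r}V'u^{2}\,ds\le u_{0}^{2}-F(r)\le u_{0}^{2}+u^{2}(V-1)$, which controls $\int V'u^{2}$ only if you already know that $u^{2}V$ stays bounded --- essentially the decay you are trying to establish, so the bootstrap is circular as described.

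The paper closes this gap with a short, self-contained trick that your proposal is missing. Set $y:=2(m+1)V-rV'$, so $y(0)=0$ and, by the second equation of \eqref{ODE-d-re}, $y'=(2m+d)V'-u^{2}r^{2m+1}$. Because the ground state $u$ is positive and decreasing, the integral equation \eqref{int-eq} gives
\begin{equation*}
V'(r)=\frac{1}{r^{d-1}}\int_{0}^{r}u^{2}(s)s^{2m+d-1}\,ds\ \ge\ \frac{u(r)^{2}}{2m+d}\,r^{2m+1},
\end{equation*}
hence $y'\ge 0$, so $y\ge 0$, i.e. $\frac{V'}{V}\le\frac{2(m+1)}{r}\to 0$. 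This single monotonicity observation supplies exactly the estimate $V'=o(V^{3/2})$ (indeed $V'/V\to0$) that your upper fence requires; with it, your argument goes through, and without it the proof is incomplete. Note also that this step is where the hypothesis $u_{0}\in\mathcal{N}_{m,0}^{0}$ (ground state, so $u>0$ and $u'<0$) enters in an essential way, beyond merely making $p=-u'/u$ positive.
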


\begin{proof}
We consider the function $z:=-\frac{u^{\prime }}{u}V^{-\frac{1}{2}}$ which
is well defined for all $r>0$ and satisfies the differential equation 
\begin{equation*}
z^{\prime }=\left( z^{2}-1\right) V^{1/2}-z\left( \frac{2m+d-1}{r}+\frac{%
V^{\prime }}{2V}\right) \;+V^{-1/2}.
\end{equation*}%
We also consider $y:=2\left( m+1\right) V-rV^{\prime }$ which satisfies the
following differential equation%
\begin{equation*}
y^{\prime }=\left( 2m+d\right) V^{\prime }-u^{2}r^{2m+1}.
\end{equation*}%
Since $u$ is decreasing, we have from the equation (\ref{int-eq})%
\begin{equation*}
V^{\prime }\left( r\right) =\frac{1}{r^{d-1}}\int_{0}^{r}u\left( s\right)
^{2}s^{2m+d-1}ds\geq \frac{u\left( r\right) ^{2}}{2m+d}r^{2m+1}.
\end{equation*}%
Hence $y$ is increasing and we have the upper bound 
\begin{equation*}
\frac{V^{\prime }}{V}\leq \frac{2\left( m+1\right) }{r}.
\end{equation*}%
Now choose $\tilde{r}$ such that $\frac{3m+d}{r}V^{-1/2}\leq 1/2$ for all $%
r\geq \tilde{r}$. Consider the direction field in the $(r,z)$ plane for the
preceding differential equation. In the set $r\geq \tilde{r}$, $z\geq 2$ we
have 
\begin{eqnarray*}
z^{\prime } &\geq &\left( z^{2}-1\right) V^{1/2}-z\frac{V^{1/2}}{2}+V^{-1/2}
\\
&\geq &\frac{1}{2}z^{2}V^{1/2}+\frac{1}{2}\left( z+1\right) \left(
z-2\right) V^{1/2} \\
&\geq &\frac{1}{2}z^{2}V^{1/2}\left( \tilde{r}\right)
\end{eqnarray*}%
It follows that, should $z(r)$ ever enter this region, it would blow up at
finite time after $\tilde{r}$ which is impossible. Hence $z$ remains
bounded. This also implies

\begin{equation*}
\underset{r\longrightarrow \infty }{\lim }u^{\prime }(r)V^{-1/2}=0.
\end{equation*}

Therefore we may apply l'Hospital's rule. We obtain 
\begin{equation*}
\underset{r\longrightarrow \infty }{\lim }z^{2}=\underset{r\longrightarrow
\infty }{\lim }\Big(\frac{u^{\prime \prime }}{u}\frac{1}{V}+\frac{1}{2}\frac{%
V^{\prime }}{V^{3/2}}z\Big)=\underset{r\longrightarrow \infty }{\lim }\Big(%
\frac{V-1}{V}-\frac{2m+d-1}{r}V^{-1/2}z\Big)=1.
\end{equation*}%
This proves the first part of the lemma.

Then for any $\kappa \in \left( 0,1\right) $ and $r$ sufficiently large, $-%
\frac{u^{\prime }}{u}V^{-1/2}\geq \kappa $ and the proof is completed by
integrating this inequality.
\end{proof}

Now we are in position to prove our uniqueness result:

\begin{thm}
\label{uniqueness-in-G}The set $\mathcal{N}_{m,0}^{0}$ has exactly one
element.
\end{thm}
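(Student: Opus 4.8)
The plan is to argue by contradiction, combining the no-crossing property with the decay estimate of Lemma~\ref{G-riccati-decay}. Suppose $\mathcal{N}_{m,0}^{0}$ contained two distinct elements, say $u_{2}(0)>u_{1}(0)>0$, and let $(u_{1},V_{1})$, $(u_{2},V_{2})$ be the corresponding ground states. Each $u_{i}$ is strictly positive and strictly decreasing on $]0,\infty[$ (as established for every solution not in $\mathcal{N}_{m,0}^{\infty}$, and in particular in Theorem~\ref{existence-gs}), so Lemma~\ref{no-cross} applies on every interval $[0,R]$ and yields $u_{2}(r)>u_{1}(r)>0$ for all $r\geq 0$.

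Next I would examine the Wronskian $w=u_{2}'u_{1}-u_{1}'u_{2}$ from \eqref{Wronskian}. From $u_{2}>u_{1}>0$ together with the integral formula for $V_{i}'$ in \eqref{int-eq}, one gets $V_{2}'>V_{1}'$ on $]0,\infty[$, hence $V_{2}>V_{1}$; therefore the right-hand side $(V_{2}-V_{1})u_{1}u_{2}$ of the Wronskian equation \eqref{wronskian-ode} is strictly positive. Rewriting \eqref{wronskian-ode} as $(r^{2m+d-1}w)'=r^{2m+d-1}(V_{2}-V_{1})u_{1}u_{2}$ and using $w(0)=0$, it follows that $r^{2m+d-1}w(r)$ is strictly increasing and strictly positive for $r>0$.

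The heart of the proof is then to show that, in spite of this, $r^{2m+d-1}w(r)\to 0$ as $r\to\infty$, which is the announced contradiction. To this end I would bound $|w|\leq |u_{2}'|u_{1}+|u_{1}'|u_{2}$ and invoke Lemma~\ref{G-riccati-decay}: since $-u_{i}'/u_{i}\cdot V_{i}^{-1/2}\to 1$ and $u_{i}'<0$, for $r$ large one has $|u_{i}'|\leq 2V_{i}^{1/2}u_{i}$, whence $|w|\leq 2(V_{1}^{1/2}+V_{2}^{1/2})u_{1}u_{2}$. Combining this with the exponential bound $u_{i}(r)\leq C_{i}\,e^{-\kappa\int_{0}^{r}V_{i}^{1/2}\,ds}$ of the same lemma (with any fixed $\kappa\in(0,1)$), the quantity $r^{2m+d-1}w$ is dominated by $r^{2m+d-1}(V_{1}^{1/2}+V_{2}^{1/2})\,e^{-\kappa\int_{0}^{r}(V_{1}^{1/2}+V_{2}^{1/2})\,ds}$.

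It remains to check that this expression tends to $0$, and this is where I expect the main obstacle to lie. Here I would use that each $V_{i}$ is increasing and tends to infinity (as already noted after \eqref{int-eq}): for $d=1$ the integral formula gives $V_{i}'\to\int_{0}^{\infty}u_{i}^{2}s^{2m}\,ds>0$, so $V_{i}$ grows linearly, while for $d=2$ it grows like $\log r$; in either case $V_{i}^{1/2}$ is bounded below by a positive constant for large $r$, so $\int_{0}^{r}V_{i}^{1/2}\,ds$ grows at least linearly, whereas $V_{i}^{1/2}=O(r^{1/2})$. Consequently the exponential factor, which decays at least like $e^{-\kappa' r}$, beats the polynomial prefactor $r^{2m+d-1}(V_{1}^{1/2}+V_{2}^{1/2})$, giving $r^{2m+d-1}w(r)\to 0$ and contradicting its being strictly increasing and positive. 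Hence $\mathcal{N}_{m,0}^{0}$ cannot contain two elements; since it is nonempty by Theorem~\ref{existence-gs} (equivalently, $a=b$ in Remark~\ref{remark on the intervalls}), it has exactly one. The delicate point is precisely this last estimate: one must control the growth of $V_{i}$ well enough to guarantee that the exponential decay of $u_{1}u_{2}$ overwhelms both the weight $r^{2m+d-1}$ and the factor $V_{i}^{1/2}$ arising from the bound on $u_{i}'$.
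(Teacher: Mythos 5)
Your proposal is correct and takes essentially the same route as the paper: assume two ground states, use Lemma~\ref{no-cross} to get $u_{2}>u_{1}$ and hence $V_{2}>V_{1}$, conclude from \eqref{wronskian-ode} that $r^{2m+d-1}w$ is strictly increasing from $w(0)=0$, and then use Lemma~\ref{G-riccati-decay} to show $r^{2m+d-1}w\to 0$, a contradiction. The only (immaterial) difference is bookkeeping in the final estimate: the paper bounds $u_{2}r^{2m+d-1}$ and $u_{2}'r^{2m+d-1}$ uniformly and lets $|u_{1}|,|u_{1}'|\to 0$, whereas you bound $|w|\leq 2\left(V_{1}^{1/2}+V_{2}^{1/2}\right)u_{1}u_{2}$ and beat the polynomial weight with the exponential decay directly --- both hinge on exactly the same ingredients.
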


\begin{proof}
Let $u_{1}(0),u_{2}(0)\in \mathcal{N}_{m,0}^{0}$ such that $%
u_{2}(0)>u_{1}(0) $. By lemma \ref{no-cross} the corresponding solutions $%
u_{1},u_{2}$ cannot intersect and we have $u_{2}(r)>u_{1}(r)>0$ for all $%
r\geq 0$. From the differential equation \eqref{wronskian-ode} for their
Wronskian $w(r)$ we see that $w(r)r^{2m+d-1}$ is a nonnegative strictly
increasing function since 
\begin{equation*}
\Big(w\left( r\right) r^{2m+d-1}\Big)^{\prime
}=(V_{2}-V_{1})u_{1}u_{2}r^{2m+d-1}>0
\end{equation*}%
and $w(0)=0$. On the other hand, we claim that 
\begin{equation*}
\underset{r\longrightarrow \infty }{\lim }\;w\left( r\right) r^{2m+d-1}=0.
\end{equation*}%
Indeed, since $u_{2}$ cannot intersect $u_{1}$ we have $\underset{%
r\rightarrow \infty }{\lim }V_{2}(r)>V_{1}\left( r\right) \geq 1$.
Trivially, $V_{2}(r)\leq \frac{u_{2}(0)^{2}r^{2\left( m+1\right) }}{2\left(
m+1\right) \left( 2m+d\right) }$. From the integral equation \eqref{int-eq}
for $u_{2}^{\prime }$, 
\begin{equation*}
u_{2}^{\prime }\left( r\right)
r^{2m+d-1}=\int_{0}^{r}(V_{2}(s)-1)u_{2}(s)s^{2m+d-1}\;ds
\end{equation*}%
and the decay properties of $u_{2}$ given in lemma \ref{G-riccati-decay} it
follows then that $u_{2}^{\prime }r^{2m+d-1}$ and $u_{2}r^{2m+d-1}$ are
uniformly bounded. Therefore 
\begin{equation*}
|w(r)r^{2m+d-1}|\leq |u_{1}||u_{2}^{\prime }r^{2m+d-1}|+|u_{1}^{\prime
}||u_{2}r^{2m+d-1}|\leq c_{1}|u_{1}|+c_{2}|u_{1}^{\prime }|
\end{equation*}%
for some positive constants $c_{1},c_{2}$ which concludes the proof.
\end{proof}


\bibliographystyle{amsplain}
\bibliography{references}

\begin{thebibliography}{99}
\bibitem{P54} Pekar, S.I., Untersuchungen \"{u}ber die Elektronentheorie der
Kristalle, \S 6, p. 29 -34, Akademie Verlag Berlin, 1954.

\bibitem{L77} Lieb, E.H., Existence and uniqueness of the minimizing
solution of Choquard's nonlinear equation, Studies in Applied Mathematics
57, 93-105 (1977).

\bibitem{P96} Penrose, R., On gravity's role in quantum state reduction,
Gen. Rel. Grav. 28, 581-600(1996).

\bibitem{L80} Lions, P.L., The Choquard equation and related questions,
Nonlinear Analysis T.M.A. 4, 1063-1073 (1980).

\bibitem{L86} Lions, P.L., Solutions complexes d'\'{e}quations elliptiques
semi-lin\'{e}aires dans $\mathbb{R}^n$, C.R. Acad. Sc. Paris t.302, S\'{e}%
rie 1 ,no. 19, 673-676 (1986).

\bibitem{CS2007} Choquard, Ph., Stubbe, J. The one-dimensional Schr\"{o}%
dinger-Newton equations, Lett. Math. Phys. 81, 177-184 (2007).

\bibitem{S2008} Stubbe, J., Bound States of two-dimensional Schr\"{o}%
dinger-Newton equations, arXiv:0807.4059, (2008).

\bibitem{CSV2008} Choquard, Ph, Stubbe, J, Vuffray, M. Stationary Solutions
of the Schr\"{o}dinger-Newton Model - An ODE Approach, Diff. Int. Eqns 21,
7-8, 665-679 (2008).

\bibitem{HMT2003} Harrison, R., Moroz, I. Tod, K.P., A numerical study of
the Schr\"{o}dinger-Newton equation, Nonlinearity 16, 101-122 (2003).

\bibitem{L83} Lieb, E.H., Sharp constants in Hardy-Littlewood-Sobolev and
related inequalities , Ann. Math. 118, 349 - 374 (1983).

\bibitem{BLP81} Berestycki,H., Lions, P.L., Peletier, L.A., An ODE approach
to the Existence of Positive Solutions for Semilinear Problems in $\mathbb{R}%
^{N}$, Indiana Univ. Math. Journ. 30, 141-157 (1981)

\bibitem{PS83} Peletier, L.A., Serrin, J., Uniqueness of Positive Solutions
of Semilinear Equations in $\mathbb{R}^n$, Arch. Rat. Mech. Anal. 81,
181-197 (1983)

\bibitem{AP86} Atkinson, F.V., Peletier, L.A., Ground states of $-\Delta
u=f(u)$ and the Emden-Fowler equation Arch. Rat. Mech. Anal. 93, 103-127
(1986)

\bibitem{K89} Kwong, M.K., Uniqueness of Positive Solutions of $\Delta
u-u+u^p=0$ in $\mathbb{R}^n$, Arch. Rat. Mech. Anal. 105, 243-266 (1989)

\end{thebibliography}

\end{document}